\documentclass[submission,copyright,creativecommons]{eptcs}
\providecommand{\event}{GASCom 2026} % Name of the event you are submitting to

\usepackage{iftex}

\ifpdf
  \usepackage{underscore}         % Only needed if you use pdflatex.
  \usepackage[T1]{fontenc}        % Recommended with pdflatex
\else
  \usepackage{breakurl}           % Not needed if you use pdflatex only.
\fi

\usepackage{xcolor}
\usepackage[utf8]{inputenc}
\usepackage[linesnumbered,ruled,vlined]{algorithm2e}
\usepackage{amsmath,amssymb,amsthm}
\usepackage{wrapfig}
\usepackage{cleveref}
\usepackage{mathrsfs}

\newtheorem{lemma}{Lemma}
\newtheorem{theorem}{Theorem}
\newtheorem{remark}{Remark}

\ifpdf
  \usepackage{underscore}         % Only needed if you use pdflatex.
  \usepackage[T1]{fontenc}        % Recommended with pdflatex
\else
  \usepackage{breakurl}           % Not needed if you use pdflatex only.
\fi

\title{Entropic Generation of Binary Words}
\author{Olivier Bodini%\orcidlink{0000-0002-1867-667X}
\institute{EREN\\ Université Sorbonne Paris-Nord}
\email{olivier.bodini@univ-paris13.fr}
\and
Francis Durand%\orcidlink{0009-0004-4146-0289}
\institute{EREN\\ Université Sorbonne Paris-Nord}
\institute{LIP6\\
Sorbonne Université, France}
\email{francis.durand@ens-paris-saclay.fr}
}

\def\titlerunning{Entropic Generation of Binary Words}
\def\authorrunning{O. Bodini \& F. Durand}

\begin{document}

\maketitle

\begin{abstract}
The uniform generation of $k$ Hamming weight binary words, equivalent to sampling $k$-subsets from $n$ elements, relies on random bits, which can be expensive.
We introduce a novel paradigm--\emph{random bit recycling}--and use it to generate such binary words in linear time while consuming as few random bits as possible.
The resulting algorithm is nearly optimal in terms of random bit consumption, meaning that it closely matches the \emph{Shannon entropic lower bound} coming from information theory.
\end{abstract}

\section{Introduction}

The uniform generation of combinatorial structures is a cornerstone of computer science.
In particular, the generation of binary words of length $n$ with exactly $k$ ones (or equivalently, random subsets of size $k$ from a set of $n$ elements) is an important primitive.
While the time complexity of these generators has been studied extensively, a secondary cost metric has gained prominence in recent years: the consumption of random bits.

In many computing environments, pseudo-random bits are cheap. 
However, in high-security applications, cryptography, and embedded systems, randomness is often derived from hardware true random number generators (TRNGs) \cite{chai2025survey}.
In these contexts, random bits are an expensive resource and can be a bottleneck for random generation.
When generating a combinatorial structure uniformly at random from a finite set of $F$ possibilities, the information-theoretic entropy~\cite{shannon1948mathematical} of the generated object is $\log_2 F$. 
By Shannon's source coding theorem, this entropy establishes a lower bound: any algorithm relying on fair coin flips (unbiased random bits) must consume, on average, at least $\mathcal{H} = \log_2 F$ bits to produce a perfectly uniform sample.
For the problem at hand of generating binary words of length $n$ with exactly $k$ ones, the size of the state space is $\binom{n}{k}$. 
Consequently, we strive to give algorithms whose expected random bit consumption is asymptotically close to $\log_2 \binom{n}{k}$.

It is important to emphasize that all algorithmic complexities presented in this work--both in terms of execution time and random bit consumption--are stated as \textit{expected} (average-case) bounds.
This is a necessity as generating a uniform random integer over a range whose size is not a power of 2 (for instance, a uniform integer in $\{1, 2, 3\}$) from a stream of fair coin flips cannot be achieved in bounded time~\cite{Knuth1976TheCO}. 
We call an algorithm \emph{first order entropic} (resp. near-entropic) if its random bit consumption is asymptotically in $\mathcal{H} + o(\mathcal{H})$ (resp. in $(1+\varepsilon)\mathcal{H} + o(\mathcal{H})$, for any $\varepsilon > 0$) where $\mathcal{H} = \log_2 \binom{n}{k}$ is the entropy of our structure.

Generating fixed Hamming weight binary words is a two-parameter problem defined by length $n$ and Hamming weight $k$ (where we can assume $k \le n/2$ by symmetry). 
To analyze our algorithms asymptotically, we evaluate complexities as $n \to \infty$ along the trajectory $k \sim n^\beta$ for a fixed $\beta \in (0, 1)$. We call this setting the \textit{polynomially sparse regime}.
Current generation methods generally bifurcate based on the number of $1$s. For highly sparse inputs ($k \ll n$), the partial Fisher-Yates shuffle~\cite{FisherYates} is efficient.
Conversely, for denser inputs ($k = \Omega(n^{2/3 + \varepsilon})$ for any $\varepsilon > 0$), an adaptation of the \textsc{Merge} algorithm~\cite{bacher2015mergeshuffle} performs well by initially approximating the target word with Bernoulli($k/n$) variables and subsequently correcting for bias.

To the best of our knowledge, no prior algorithm simultaneously achieves linear time complexity $O(n)$ and near-optimal entropic cost in the polynomially sparse regime with $\beta \in (0,2/3]$.

\subsection*{Our Contribution}

In this paper, we present a novel algorithm that bridges this gap. 
We propose a generator for uniform binary words that is both entropically near-optimal and computationally linear.

We observe that the standard Fisher-Yates procedure, while inserting the $k$ ones among the $n-k$ zeros, implicitly constructs a uniform random permutation on these $1$s.
In order to capitalize on this implicit uniform random permutation, we introduce an algorithm for \textit{deconstructing uniform random permutations}.
The procedure is an emulation of a generative process in reverse, transforming the uniform random permutation into independent random bits.
In a way, we ``recycle'' the randomness usually discarded during the insertion process.
It is worth noting that the broader paradigm of entropy recycling in random generation is concurrently being explored in some different frameworks, see~\cite{draper2026efficient} for online sampling and \cite[Alg. 5]{DevroyeG15} for recycling bits while sampling batches of i.i.d variables.

The random bits obtained at the end of the generation can then be used to do something else, having an effective entropic random bit consumption (the number of random bit used in the process minus the number of random bits returned) that tightly matches the lower bound.
This allows entropic generation of batches of binary words of size $n$ with $k$ ones, by using the random bits recycled for the previous words to generate the next one.

Alternatively, we can stop the insertions mid process, the smaller uniform random permutation can then be recycled mid-process, and the random bits reinjected into the generation procedure.
Doing this multiple times allow a $1+\varepsilon$ entropic generation for any $\varepsilon > 0$, meaning that the generation algorithm is near-entropic.
The algorithm remains linear.

The remainder of this paper is organized as follows. \Cref{sec:existingAlgos} presents two algorithms for sampling binary words that are we adapt in the subsequent sections. \Cref{sec:recycling} details our ``entropy recycling'' mechanism for uniform random permutations. \Cref{sec:entropicSampling} presents the full sampling algorithm and its complexity analysis.

\section{Existing Algorithm}\label{sec:existingAlgos}

\subsection{The Fisher-Yates Insertion Algorithm}

A standard algorithm for generating a uniform random permutation is the Fisher-Yates shuffle, also known as the Knuth shuffle~\cite{FisherYates}. 
While typically used to permute $n$ distinct elements, the procedure can be adapted to generate random binary words of length $n$ with fixed Hamming weight $k$.

\begin{wrapfigure}{R}{0.4\textwidth}
  \begin{algorithm}[H]
    \SetKwInOut{Input}{Input}
    \SetKwInOut{Output}{Output}
    \DontPrintSemicolon
    \caption{FY-Binary$(n, k)$}
    \label{alg:fyBinary}
    \small
    \Input{$n, k \in \mathbb{N}, k \le n$}
    \Output{Binary word $c \in \{0,1\}^n$}
    \BlankLine
    Initialize $c \leftarrow 0^{n-k}1^k$\;
    \For{$i \leftarrow n$ \textbf{downto} $n-k+1$}{
        $j \leftarrow \text{Uniform}(1, i)$\;
        Swap$(c_i, c_j)$\;
    }
    \Return{$c$}\;
  \end{algorithm}
\end{wrapfigure}

The adaptation exploits the indistinguishability of the 0s. Instead of performing a full shuffle of $n$ elements, we perform a \textit{partial shuffle} restricted to the $k$ ones. The array is initialized with $n-k$ zeros followed by $k$ ones. The algorithm then iterates backwards from $n$ down to $n-k+1$. In each step $i$, the element at position $i$ (guaranteed to be a 1 prior to the swap) is exchanged with an element at a uniformly selected index $j \in \{1, \cdots, i\}$.

% This reduction limits the number of necessary random variates to exactly $k$, drawn from ranges of size $n, n-1, \cdots, n-k+1$. The procedure is summarized in Algorithm~\ref{alg:fy_binary}. While efficient for small $k$, the cost of generating large random variates and the array manipulation becomes significant as $k$ increases.

\subsection{The Balanced Merge Procedure}

The core of the \textsc{MergeShuffle} algorithm from \cite{bacher2015mergeshuffle} is the \textsc{BalanceMerge} procedure, which combines two previously shuffled sub-arrays of size $n/2$ into a single randomly shuffled array of size $n$. 
When the input sub-arrays are of equal size (the ``balanced'' case), this procedure is highly efficient in terms of random bits.
This corresponds to generating a balanced binary word (i.e. with $k=n/2$).

The algorithm proceeds in two distinct phases:
\begin{enumerate}
    \item \textbf{Bit-Driven Selection:} We iterate through the target array. At each step, a random bit is generated (a coin flip). If the bit is 0, we select the next element from the left sub-array; if 1, we select from the right. This loop continues until one of the two sub-arrays is completely exhausted.
    \item \textbf{Fisher-Yates Completion:} Once one sub-array is empty, the remaining elements from the other sub-array are not simply appended. 
    To ensure a uniform distribution, they are inserted into the final array using a partial Fisher-Yates shuffle. Specifically, each remaining element is swapped with a randomly selected position in the valid range processed so far.
\end{enumerate}

When $k=n/2$ generating a uniform binary balanced word using this method consumes $n + \Theta(\sqrt{n} \log n)$ random bits on average. \cite{bacher2015mergeshuffle}.

\begin{wrapfigure}{R}{0.5\textwidth}
  \begin{algorithm}[H]
\SetKwInOut{Input}{Input}
\SetKwInOut{Output}{Output}
\DontPrintSemicolon
\caption{BalancedMerge$(n)$, $n$ even}
\label{alg:balancedMerge}
\tcp{Generates a balanced word}
Let $W \leftarrow \epsilon$, the empty word\;
Let $i_0 \leftarrow 0, \quad i_1 \leftarrow 0$ \tcp*{The numbers of $1$s and $0$s in $W$}
\BlankLine
\tcp{Phase 1: Coin flipping}
\textbf{Loop:}\;
    \eIf{\textnormal{RandomBit}() $= 0$}{
        \eIf{$i_0 = n/2$}{\textbf{break}, let $b \leftarrow 0$}{
        $i_0 \leftarrow i_0 + 1$\;
        $W \leftarrow W\cdot 0$}
    }{
        \eIf{$i_1 = n/2$}{\textbf{break}, let $b \leftarrow 1$}{
        $i_1 \leftarrow i_1 + 1$\;
        $W \leftarrow W\cdot 1$}
    }
\BlankLine
\tcp{Phase 2: Fisher-Yates insertion for remaining elements}
\While{$i_0+i_1 < n$}{
    $W \leftarrow W\cdot b$\;
    $j \leftarrow \text{Uniform}(1, i_0 + i_1)$\;
    Swap$(W[j], W[i_0+i_1])$\;
    $i_b \leftarrow i_b + 1$\;
}
\end{algorithm}
\end{wrapfigure}

\begin{lemma}
\label{lem:fyInsertions}
In the balanced merge procedure called on an even $n$, the number of elements $M$ that remain in the non-depleted sub-array (which are inserted using the Fisher-Yates procedure) follows the Banach Matchbox~\cite[VI. 8. (a)]{Feller1968} distribution, defined for $m \in [0,n/2]$ by:
\[ \mathbb{P}(M = m) = \binom{n - m}{n/2} \frac{1}{2^{n - m}+1}. \]
\end{lemma}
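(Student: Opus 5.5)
The plan is to compute $\mathbb{P}(M=m)$ directly by encoding the whole run of Phase~1 of \Cref{alg:balancedMerge} as a prefix of the i.i.d.\ fair coin sequence, and counting the prefixes that lead to the value $m$. Write $h=n/2$. Phase~1 stops at the first coin flip that selects a side whose counter already equals $h$. If $b$ is that side, then just before this flip $i_b=h$ and $i_{1-b}=h-M$. So exactly $h+(h-M)=n-M$ flips were consumed into $W$, and the stopping flip is the $(n-M+1)$-st.

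First I characterize the event $\{M=m,\ \text{exhausted side}=b\}$ as a condition on the first $n-m+1$ coin flips. The condition is that among the first $n-m$ flips exactly $h$ equal $b$ and $h-m$ equal $1-b$, and that flip $n-m+1$ equals $b$. I need to check that this is both necessary and sufficient, in particular that no earlier break can occur.
\begin{itemize}
\item Neither counter exceeds $h$ during the first $n-m$ flips, so no flip among them hits a full side.
\item At flip $n-m+1$, side $b$ is full, so the loop breaks there.
\end{itemize}
Each such prefix has probability $2^{-(n-m+1)}$, and there are $\binom{n-m}{h}$ of them. Hence
\[\mathbb{P}(M=m,\ \text{exhausted side}=b)=\binom{n-m}{h}2^{-(n-m+1)}.\]

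Next I sum over $b\in\{0,1\}$, treating $m=0$ separately.
\begin{itemize}
\item For $1\le m\le h$ the two events ($b=0$ and $b=1$) are disjoint, which gives $\binom{n-m}{h}2^{-(n-m)}$.
\item For $m=0$ the same flip configuration (both counters full after $n$ flips) covers both values of $b$. Here $b$ is irrelevant and no Fisher--Yates insertion occurs, so $\mathbb{P}(M=0)=\binom{n}{h}2^{-n}$, which is the same expression.
\end{itemize}
This is exactly Banach's matchbox law with $N=h$ matches per box \cite[VI.~8.~(a)]{Feller1968}, namely $\binom{2N-m}{N}2^{-(2N-m)}$.

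The main obstacle is bookkeeping the exponent: the factor $2^{-1}$ from the stopping flip and the factor $2$ from the choice of side cancel when $m\ge 1$, and the $m=0$ case has to be handled separately. I therefore expect the displayed denominator in the statement to be read as $2^{n-m}$. The computation above gives that denominator, and not the literal $2^{n-m}+1$ or $2^{n-m+1}$. I would confirm this reading by checking normalization, $\sum_{m=0}^{h}\binom{n-m}{h}2^{-(n-m)}=1$. The standard proof of this identity uses the same stopping-time argument: Phase~1 terminates almost surely, so the probabilities of the disjoint events above must sum to $1$.
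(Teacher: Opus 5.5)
Your argument is correct and is essentially the paper's: the paper just identifies Phase~1 with Banach's matchbox problem and cites Feller, while you carry out Feller's prefix-counting derivation explicitly, and you rightly conclude that the denominator must be $2^{n-m}$ (as used in Algorithm~\ref{alg:extractBits} and in the proof of Lemma~\ref{lem:recycleMerge}) rather than the literal $2^{n-m}+1$, which does not even normalize. One small correction: $m=0$ needs no separate treatment, since the stopping flip $n+1$ already makes $\{M=0,b=0\}$ and $\{M=0,b=1\}$ disjoint, each of probability $\binom{n}{n/2}2^{-(n+1)}$, so the factors $2$ and $2^{-1}$ cancel there as well.
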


\begin{proof}
It is exactly the same problem than in~\cite[VI. 8. (a)]{Feller1968}.
\end{proof}

\begin{remark}
A consequence is that the expected number of elements that require Fisher-Yates insertion is $\Theta(\sqrt{n})$. 
Because these remaining elements require drawing discrete uniform integers of increasing ranges, the bit cost for this completion phase is $\Theta(\sqrt{n} \log n)$. 
This explains why the \textsc{BalancedMerge} algorithm is entropically effective: it processes the vast majority of elements using one bit per element, and restricts the more expensive Fisher-Yates fallback to a vanishingly small $O(\sqrt{n})$ fraction of the input.
\end{remark}

\begin{remark}
The \textsc{BalancedMerge} algorithm can be modified to sample binary words that are not balanced by using Bernoulli variables of parameter $k/n$ instead of coin flips.
This leads to linear time, near-entropic algorithms for binary words, as long as $k$ is not too small (as long as $k = \Omega(n^{2/3 +\varepsilon})$, with some $\varepsilon > 0$ to be precise) and all the Bernoulli variables are generated in a near-entropic way.
This last part can be achieved by generating the Bernoulli variables in fixed sized batch, say by using \cite{saad2020fast}.
\end{remark}

\section{Random Bit Recycling}\label{sec:recycling}

\subsection{Random Bit Extraction}

In the standard paradigm of random generation, an algorithm consumes a stream of independent, uniform random bits to produce a combinatorial structure~\cite{Knuth1976TheCO}.
Here, we invert this process: in the case of a uniform random permutation, we map the uniform combinatorial structure back into a sequence of independent, uniform random bits. 
We call this \emph{random bit extraction} or \emph{random bit recycling} as we give back some of the bits used to generate the structure.

Consider a uniform balanced binary word $W \in \{0,1\}^{n}$, which consists of exactly $n/2$ zeros and $n/2$ ones. 
The entropy of this structure is $\log_2 \binom{n}{n/2}$, which by Stirling's approximation is $n - \frac{1}{2}\log_2(\pi n) + O(1)$. 
Our goal is to extract a sequence of independent random bits from $W$ whose expected length is as close as possible to $\mathcal{H}$, the entropy of the structure, which, by Shannon's source coding theorem~\cite{shannon1948mathematical}, is the maximum amount that one could extract.

We achieve this by conceptually reversing the generative \textsc{BalancedMerge} procedure.
To do so, we first sample the simulated depletion condition: the number of remaining elements $M$ whose distribution is given by \cref{lem:fyInsertions}, then emulate in reverse the $M$ Fisher-Yates insertions.
The resulting word is composed of independent unbiased bits.

The full extraction procedure is detailed in Algorithm~\ref{alg:extractBits}.

\begin{lemma}\label{lem:recycleMerge}
    When called on $W$ a uniform balanced binary word of length $n$, \cref{alg:extractBits} runs in linear time, consumes $O(\sqrt{n}\log n)$ random bits and returns $n - O(\sqrt{n})$ uniform\footnote{Standard $O(\cdot)$ notation absorbs the sign, making $+ O(f(n))$ and $- O(f(n))$ equivalent.
    However, we retain the sign to indicate the direction of the deviation.}, independent bits.
\end{lemma}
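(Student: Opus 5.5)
The plan is to read \cref{alg:extractBits} as running \cref{alg:balancedMerge} backwards, and to check three things: (i) the reverse run is a measure-preserving bijection, so the output bits are uniform and independent; (ii) the bits consumed come only from sampling $M$ and from undoing the Fisher-Yates swaps; (iii) the output length is $n-M-1$ or so, with $\mathbb{E}[M]=\Theta(\sqrt n)$.

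\textbf{Uniformity.} Consider the forward process of \cref{alg:balancedMerge} as a map from random inputs to outcomes. An input is a phase-1 coin sequence $B$, which ends at the first "overflow" flip, together with the uniform indices $j$ drawn in phase 2. Because the phase-1 word is built from fair coins stopped at a stopping time, the pair (phase-1 word, $M$) has an explicit law: conditional on $M=m$ and on the overflow symbol $b$, the phase-1 word is uniform among words with $n/2$ copies of $1-b$ and $n/2-m$ copies of $b$.

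The Fisher-Yates phase inserts $m$ copies of $b$ into uniform positions. This yields a uniform balanced word, independent of $m$, and it is exactly the content of the correctness of BalancedMerge.

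The extraction step reverses this. Given a uniform $W$, it samples $M=m$ from the distribution of \cref{lem:fyInsertions} and a symbol $b$ with probability $1/2$. It then reverses the $m$ insertions: at step $t$, it selects one of the current copies of $b$, with the correct conditional probability of having been the last inserted one. Equivalently, it draws the swap index $j$ uniformly from the positions whose reversal is consistent, which can be done by rejection or by a uniform draw among the $b$'s in the prefix, and it deletes that copy.

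By Bayes' rule, the joint law of (prefix word, $m$, $b$) produced this way equals the forward joint law. Hence the remaining prefix, read as coin flips, is distributed as the phase-1 coin sequence conditioned on the stopping data, and so is uniform. The main subtlety is that the phase-1 coins are \emph{not} i.i.d.\ unconditionally once we condition on the stopping time; the number of them depends on $m$.

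I expect the main obstacle to be this stopping-time issue. I would resolve it by outputting the prefix as a variable-length string whose length is $n-m$. Conditioned on its length, such a string is uniform among strings of that length satisfying the balance constraint. I would then argue, as in standard randomness-extraction lemmas, that concatenating these with the bits of future calls yields i.i.d.\ fair bits. Alternatively, I would restrict to a uniform-given-length output: by symmetry, given $m$ and $b$, all constrained words are equally likely. The constraint costs at most $O(\log n)$ bits, which can be absorbed by a further exact sampling step. I would check carefully which of the two the algorithm does.

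\textbf{Costs.} Sampling $M$ from the Banach matchbox law can be done by inversion with expected $O(\log n)$ bits via Knuth–Yao. Each of the $M$ reverse insertions needs a uniform integer in a range of size at most $n$, costing $O(\log n)$ expected bits. With $\mathbb{E}[M]=\Theta(\sqrt n)$ from the remark after \cref{lem:fyInsertions}, the total is $O(\sqrt n\log n)$ consumed bits.

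Time is $O(n)$: one scan plus $M$ constant-time swaps and deletions at the tail. The matchbox sampling takes expected $O(\sqrt n)$ time if done sequentially from $m=0$.

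The returned length is $n-M-O(1)$, whose expectation is $n-O(\sqrt n)$.
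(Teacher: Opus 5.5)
Your cost accounting, running-time argument and length estimate match the paper. Reading the extraction as \textsc{BalancedMerge} run backwards is also the right intuition. There is, however, a genuine gap at the step that carries the lemma: showing that the returned string consists of independent fair bits. You correctly sense that conditioning on $M$ spoils the i.i.d.\ structure, but neither of your resolutions works, and Algorithm~\ref{alg:extractBits} does neither. Returning the bare prefix of length $n-m$ fails, because being uniform among constrained words given the length does not make the concatenation with later bits fair. For $n=2$, the bare prefix is $0$ or $1$ (if $m=1$) and $01$ or $10$ (if $m=0$), each with probability $1/4$. After padding with fresh bits, the first two bits are therefore $00$ with probability $1/8$. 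The re-encoding option is left vague: it would need an exact ranking of the constrained word, which the algorithm does not perform and which is not obviously linear-time.

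The missing ingredient is the identity the paper proves. For every $m$ and every $w$ of length $n-m$ with $n/2$ copies of $1-b$ and $n/2-m$ copies of $b$,
\[
\mathbb{P}(\tilde W = w,\ M=m,\ b) = \binom{n-m}{n/2}^{-1}\cdot \binom{n-m}{n/2}2^{-(n-m)}\cdot\frac12 = 2^{-(n-m+1)} .
\]
\begin{itemize}
\item The first factor is the paper's inductive claim that each reverse swap keeps the prefix uniform in its composition class. The paper proves it by counting the $n-i$ (word, chosen copy) configurations that produce a given target. Your Bayes step only asserts this as ``the correct conditional probability''; it should be checked.
\item The remaining factors are the laws of $M$ and $b$. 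The matchbox law is exactly what cancels the binomial coefficient.
\end{itemize}
Hence the prefix followed by one bit recording the side has probability $2^{-\text{length}}$. These strings are the stopped phase-1 coin sequences of \textsc{BalancedMerge}, which form a complete prefix-free set. So the output is a fair coin stream stopped at a stopping time, and this is what permits concatenation with later bits. In short, instead of conditioning on the stopping data, output it via the overflow coin.

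This coin is the \emph{depleted} symbol, i.e.\ $1-b$ when $b$ is the removed symbol. It is not $b$, as in your labeling and in the printed return line. With $b$ appended, $n=2$ yields both outputs $01$ and $010$, so prefix-freeness fails.
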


\begin{proof}[Proof of Lemma \ref{lem:recycleMerge}]
    Let $W$ be distributed uniformly over all $\binom{n}{n/2}$ balanced words.
    We denote by $\Tilde{W}$ the returned binary prefix of length $n-M$. 
    We condition our analysis of the loop invariant on the event $(b=0)$ (the case $(b=1)$ is symmetric). 
    Let $W^{(i)}$ be the prefix of length $L = n-i$ after $i$ iterations of the extraction loop. 

    \textbf{Claim:} For any $i \in \{0, \dots, M\}$, conditional on $b=0$, the prefix $W^{(i)}$ is uniformly distributed among all binary words of length $L$ containing exactly $n/2$ ones.
    
    \textit{Proof of claim:} We proceed by induction. 
    The base case $W^{(0)} = W$ is trivial. 
    Assume $W^{(i)}$ is uniformly distributed among all binary words of length $L$ with exactly $n/2$ ones. 
    The algorithm selects an index $j$ uniformly from the $n/2 - i$ positions of '0', swaps $W^{(i)}_j$ with $W^{(i)}_L$, and truncates the word to length $L-1$.
    Let $y$ be a specific target word of length $L-1$. 
    To obtain $y$, the state before truncation must be $y \cdot 0$.
    There are exactly $L = n-i$ valid ways to choose an initial word $x$ and a distinguished '0' that yield $y \cdot 0$ post-swap: $n/2 - i$ ways by choosing $x = y\cdot 0$, and $n/2$ ways by choosing an $x$ ending in '1'. 
    Because the prior word and the index choice are uniform, each configuration has probability $((n/2-i)\binom{L}{n/2})^{-1}$. 
    Therefore, the probability of obtaining $y$ is exactly $L \times ((n/2-i)\binom{L}{n/2})^{-1} = \frac{L}{n/2-i} \frac{(n/2)!(L-n/2)!}{L!} = \binom{L-1}{n/2}^{-1}$. 
    This confirms the conditional distribution of $W^{(i+1)}$ remains uniform, proving the claim.
    
    \textbf{Net bit extraction:} The variable $M$ (following the Banach Matchbox distribution) and the uniform coin flip $b$ are sampled independently, yielding the joint probability $\mathbb{P}(M=m \cap b=0) = \binom{n-m}{n/2} 2^{-(n-m)} \times \frac{1}{2} = \binom{n-m}{n/2} 2^{-(n-m+1)}$. 
    Let $w$ be a specific binary sequence of length $n-m$ containing exactly $n/2$ ones. 
    By our previous claim, the conditional probability of obtaining $w$ is exactly $\mathbb{P}(\tilde{W} = w \mid M = m \cap b=0) = \binom{n-m}{n/2}^{-1}$. 
    Multiplying these gives the joint probability of extracting the exact prefix $w$ alongside $b=0$, which cancels the binomial coefficient: $\mathbb{P}(\tilde{W} = w \cap b=0) = \binom{n-m}{n/2}^{-1} \times \binom{n-m}{n/2} 2^{-(n-m+1)} = 2^{-(n-m+1)}$.
    This proves that the extracted prefix $\tilde{W}$, when concatenated with the independent bit $b$, yields a sequence of exactly $n-M+1$ perfectly independent and uniform random bits.
    
    The algorithm runs in $O(n)$ time.
    As the expected value of the Banach Matchbox distribution yields $\mathbb{E}[M] = O(\sqrt{n})$, this procedure successfully extracts $n - O(\sqrt{n})$ bits on average.
    Sampling $M$ takes on average $O(\log n)$ bits to run using the method in \cite{devroye1987simple}, and each sample of $\text{Uniform}(S_b)$ takes $O(\log n)$ bits and time, so the procedure consumes $O(\sqrt{n}\log n)$ random bits, for a net gain of $n - O(\sqrt{n}\log n)$.
\end{proof}

\begin{algorithm}[htbp]
\SetKwInOut{Input}{Input}
\SetKwInOut{Output}{Output}
\DontPrintSemicolon
\caption{ExtractBits$(W)$}
\label{alg:extractBits}
\small
\Input{A uniform balanced binary word $W \in \{0,1\}^{n}$ (exactly $n/2$ zeros and $n$ ones)}
\Output{A sequence of independent uniform random bits}
\BlankLine
\tcp{1. Sample the simulated depletion state}
Sample $M \in \{0, \cdots, n/2\}$ according to the Matchbox distribution\footnotemark: $\mathbb{P}(M=m) = \binom{n-m}{n/2} 2^{-(n-m)}$\;
Sample a uniform bit $b \leftarrow \text{Uniform}(\{0, 1\})$\;
\BlankLine
\tcp{2. Reverse Fisher-Yates Extraction}
\For{$i \leftarrow 0$ \KwTo $M - 1$}{
    \tcp{Find all active occurrences of bit $b$}
    Let $S_b \leftarrow \{k \in \{1, \cdots, n-i\} \mid W_k = b\}$\;
    
    \tcp{Select one occurrence uniformly at random}
    $j \leftarrow \text{Uniform}(S_b)$\; 
    
    \tcp{Swap it to the current end of the active word}
    Swap$(W_j, W_{n-i})$\label{line:swap}\;
}
\BlankLine
\tcp{3. The remaining prefix contains the unconditioned bits}
\Return{$W[1 \cdots n-M] \cdot b$}\;
\end{algorithm}

\footnotetext{One can easily show that the matchbox distribution is log-concave, hence it is possible to sample $M$ very efficiently, using the algorithm described in \cite{devroye1987simple}.}

\subsection{Recursive Deconstruction of Uniform Permutations}

Using the recycling mechanism for balanced binary words, we deconstruct uniform random permutations.
The core idea is to split the permutation of size $n$ into one balanced binary word and two uniform permutations of size $n/2$.
This gives algorithm~\ref{alg:deconstructPerm}.
\begin{wrapfigure}{R}{0.5\textwidth}
\begin{algorithm}[H]
\SetKwInOut{Input}{Input}
\SetKwInOut{Output}{Output}
\DontPrintSemicolon
\caption{DeconstructPermutation$(\pi)$}
\label{alg:deconstructPerm}
\small
\Input{A uniform random permutation $\pi$ of size $n \geq 2$}
\Output{A stream of independent uniform random bits}
\BlankLine
\If{$n = 2$}{
    \textbf{Yield} $0$ if $\pi = (1, 2)$, $1$ if $\pi = (2, 1)$ \tcp*{Base case}
}
\BlankLine
\tcp{1. Odd Size Reduction}
\If{$n$ is odd}{
    $\pi' \leftarrow \pi$ with the maximum element removed\;
    \textbf{Yield} {\textnormal{DeconstructPermutation}$(\pi')$}\;
}
\BlankLine
\tcp{2. Even Size Decomposition ($n$)}
Initialize an empty word $W \in \varepsilon$\;
Initialize empty sequences $\pi_{top}, \pi_{bot}$\;
\BlankLine
\For{$i \leftarrow 1$ \KwTo $n$}{
    \eIf{$\pi[i] > n/2$}{
        $W[i] \leftarrow 1$\;
        $\pi_{top} \leftarrow \pi_{top} \cdot (\pi[i] - n/2)$\;
    }{
        $W[i] \leftarrow 0$\;
        $\pi_{bot} \leftarrow \pi_{bot} \cdot (\pi[i])$\;
    }
}
\BlankLine
\tcp{3. Bit Extraction and Recursion}
$Bits \leftarrow \textnormal{ExtractBits}(W)$\;
\textbf{Yield} $Bits$\;
\BlankLine
DeconstructPermutation$(\pi_{top})$\;
DeconstructPermutation$(\pi_{bot})$\;
\end{algorithm}
\end{wrapfigure}

The entropy of such a uniform permutation is $\log_2(n!) = n \log_2 n - O(n)$. 
We demonstrate that our algorithm achieves this optimal bound up to a linear error term.

\begin{theorem}
\label{thm:totalRecycledBits}
Algorithm~\ref{alg:deconstructPerm} has a running time in $O(n\log n)$ and if we denote $B(n)$ the expected number of independent uniform random bits extracted from a uniform random permutation of size $n$ :
\begin{equation*}
    B(n) = n\log_2 n - O(n).
\end{equation*}
\end{theorem}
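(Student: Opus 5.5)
The plan is to set up a recurrence for both the running time and the expected number of extracted bits, driven by the recursive structure of Algorithm~\ref{alg:deconstructPerm}, and then solve it. This is only legitimate if the objects fed to the recursive calls have the right distributions, so the first step is a distributional claim.

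\textbf{Distributional claim.} If $\pi$ is uniform on permutations of size $n$, then:
\begin{itemize}
\item for $n$ odd, $\pi'$ (maximum removed) is uniform of size $n-1$;
\item for $n$ even, the triple $(W,\pi_{top},\pi_{bot})$ is uniform on the product of the balanced words of length $n$ and two copies of the permutations of size $n/2$.
\end{itemize}
The even case follows from the bijection $\pi \mapsto (W,\pi_{top},\pi_{bot})$: the counts agree, since $n! = \binom{n}{n/2}\,((n/2)!)^2$, and the map is injective because $\pi$ is recovered by interleaving along $W$. The odd case is the standard fact that deleting $n$ from a uniform permutation is $n$-to-$1$ onto permutations of size $n-1$.

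Independence of the three components implies that the bit streams produced by ExtractBits$(W)$ (uniform and independent by Lemma~\ref{lem:recycleMerge}) and by the two recursive calls are mutually independent, since each is a function of independent inputs together with fresh randomness. An induction on $n$ then shows that the whole yielded stream consists of independent uniform bits. The base case $n=2$ is a single fair bit.

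\textbf{Running time.} Writing $T(n)$ for the time, Lemma~\ref{lem:recycleMerge} and the linear splitting loop give
\[
T(n) \le 2T(n/2) + O(n)
\]
for even $n$, and $T(n) \le T(n-1) + O(n)$ for odd $n$. Removing the maximum is linear. Since an odd step is always followed by an even one, at most one odd reduction occurs per level, which costs an extra $O(n)$ per level. The master theorem then gives $T(n) = O(n\log n)$.

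\textbf{Bit count.} By Lemma~\ref{lem:recycleMerge}, for even $n$,
\[
B(n) = 2B(n/2) + n - O(\sqrt n),
\]
and for odd $n$, $B(n) = B(n-1)$.

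Two points need care here.
\begin{itemize}
\item The $O(\sqrt n)$ in Lemma~\ref{lem:recycleMerge} must be understood as a uniform bound $c\sqrt n$ valid for all even $n \ge 2$. I will justify this from $\mathbb{E}[M] = O(\sqrt n)$ for the matchbox distribution with an explicit constant, rather than only asymptotically.
\item Odd reductions lose size. Unrolling over $\lfloor\log_2 n\rfloor$ levels, the subproblems at depth $d$ have sizes $n_d$ with $n/2^d - 2 \le n_d \le n/2^d$, since each halving is preceded by at most a loss of one. There are $2^d$ of them, so level $d$ contributes at least $2^d(n/2^d - 2) - 2^d c\sqrt{n/2^d} = n - O(2^d) - c\sqrt{n2^d}$.
\end{itemize}

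Summing over $d \le \log_2 n - O(1)$:
\begin{itemize}
\item the main terms give $n\log_2 n - O(n)$;
\item the terms $\sum_d 2^d$ give $O(n)$;
\item the terms $\sum_d \sqrt{n2^d}$ form a geometric series dominated by the last level, giving $O(\sqrt{n\cdot n}) = O(n)$.
\end{itemize}
Hence $B(n) \ge n\log_2 n - O(n)$. The upper bound $B(n) \le \log_2 n! \le n\log_2 n$ holds because $B(n) \le n\log_2 n$ directly from the same recurrence with the error terms dropped.

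\textbf{Main obstacle.} I expect the main difficulty to be bookkeeping the accumulated rounding losses from odd sizes together with the $\sqrt n$ errors, making sure both telescope to $O(n)$ rather than $O(n\log n)$. The first thing I would try is an inductive hypothesis of the form
\[
B(n) \ge n\log_2 n - Cn
\]
for all $n \ge 2$, checking it separately in the odd and even cases with a suitably large $C$. In the even case it helps to strengthen the hypothesis with a $+C'\sqrt n$ slack term to absorb the $c\sqrt n$ loss.
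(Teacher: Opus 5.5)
Your argument is correct, and its skeleton matches the paper's: reduce to the recurrence $B(n)=n+2B(\lfloor n/2\rfloor)-(\text{sublinear})$ and solve it. The running-time part is essentially identical. The difference lies in how the bit recurrence is solved.

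\begin{itemize}
\item The paper invokes the Drmota--Szpankowski master theorem for discrete divide-and-conquer recurrences as a black box, which handles the floors automatically.
\item You unroll the recursion tree instead. You observe that depth-$d$ subproblems have size $\lfloor n/2^d\rfloor$. You then bound the odd/floor losses ($O(2^d)$ at depth $d$) and the matchbox losses ($c\sqrt{n2^d}$ at depth $d$) by geometric sums.
\end{itemize}

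Your route is elementary, self-contained and gives explicit constants. Your bijection $\pi\mapsto(W,\pi_{top},\pi_{bot})$ also supplies the uniformity and independence of the recursive inputs, which the paper uses tacitly.

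Three points need fixing.

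\emph{(i) Net versus gross count.} The paper's recurrence subtracts $O(\sqrt{n}\log n)$ because it counts the net gain of ExtractBits: bits returned minus the fresh bits spent on $M$, $b$ and the uniform indices. This net figure is what Theorem~\ref{thm:optimalFyRecycling} relies on. Your $O(\sqrt{n})$ counts only the returned bits. Your level sum survives the change. With $m=n/2^d$, the loss at depth $d$ becomes $O(2^d\sqrt{m}\log_2 m)$, and $\sum_d 2^d\sqrt{m}\log_2 m=n\sum_j j2^{-j/2}=O(n)$. You should state the argument in this form.

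\emph{(ii) The upper bound.} The inequality $B(n)\le\log_2 n!$ is false for the gross count you analyse. Fresh bits consumed inside ExtractBits are re-emitted: for $n=4$, $\mathbb{E}[M]=7/8$, so $B(4)=(5-7/8)+2B(2)=6.125>\log_2 24$. (For the net count, $\log_2 n!$ is the right ceiling.) The upper bound must therefore come from the recurrence. You also cannot drop the $+1$ contributed by the appended bit $b$, since $B(4)>4+2B(2)$. Using $B(n)\le n+1+2B(\lfloor n/2\rfloor)$ gives $B(n)\le n\log_2 n+O(n)$, which is all the statement needs.

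\emph{(iii) The fallback induction.} It would not close as described. Checked on its own, the odd step $B(n)=B(n-1)$ loses about $\log_2 n$ bits, which no hypothesis of the form $n\log_2 n-Cn+C'\sqrt{n}$ can absorb. You must merge the odd step with the next halving, $B(n)\ge 2B(\lfloor n/2\rfloor)+n-1-c\sqrt{n}$. There the $(\sqrt{2}-1)C'\sqrt{n}$ gain pays for the loss. Your unrolled argument has no such issue.
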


\begin{proof}
By the master theorem for divide-and-conquer algorithms (see \cite[Thm. 2.3]{roura2001improved} for a formal statement), as the toll function is linear and the two sub-problems are of size $n/2$ the runtime of algorithm~\ref{alg:deconstructPerm} is in $O(n\log(n))$.
For the random bit extraction, the result is an application of the more complex \cite[Thm. 1]{drmota2013master}, with $B(n) = n + 2B(\lfloor n/2 \rfloor) - O(\sqrt{n}\log n)$.
\end{proof}

% \begin{remark}
% In Algorithm~\ref{alg:deconstruct_perm}, when the permutation size $N = 2n+1$ is odd, the position of the maximum element is simply yielded as a bounded uniform integer in $\{1, \cdots, 2n+1\}$. It is possible--albeit non-trivial--to recycle this uniform integer directly into a (small) stream of independent random bits. \red{supprimer cette remarque ?}
% \end{remark}

\section{Entropic Sampling of Sparse Binary Words}\label{sec:entropicSampling}

\subsection{Extracting Fisher-Yates Entropy via Distinguishable Insertions}

Morally, the reason the partial Fisher-Yates shuffle fails to be entropic for generating combinations is that it implicitly generates an ordering on the $1$s that are inserted, from their order of insertion.
This wastes $\log_2(k!)$ random bits on an internal ordering that is ultimately ignored. 
We can recover this ``wasted'' entropy by making this ordering explicit, extracting it, and recycling it. 

Instead of initializing the array with indistinguishable $1$s, we initialize it with distinct identifiers $1, 2, \cdots, k$. 
We then run the partial Fisher-Yates shuffle as before. 
Once the $k$ insertions are complete, the array indices containing non-zero elements dictate the generated binary word. 
Additionally, the sequence of these non-zero elements, when read from left to right, forms a uniform random permutation of the set $\{1, \cdots, k\}$. 

By isolating this permutation and feeding it into the \textnormal{DeconstructPermutation} algorithm, we ``give back'' the excess random bits consumed during the generation phase. 
This coupling of Fisher-Yates generation with recursive deconstruction yields an effective algorithm for the polynomially sparse regime.

\begin{theorem}
\label{thm:optimalFyRecycling}
We assume $k \sim n^\beta$ with a fixed $\beta \in (0,1)$. 
The generation of a binary word of length $n$ and Hamming weight $k$ via distinguishable Fisher-Yates insertion followed by \textnormal{DeconstructPermutation} has an expected runtime of $O(n)$. 
Furthermore, the algorithm is first order entropic: the expected net bit consumption (bits consumed minus bits recycled) is $\log_2 \binom{n}{k} + O(k)$.
\end{theorem}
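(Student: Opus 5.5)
We split the statement into three parts: correctness (the joint law of the output), running time, and net bit accounting. The argument rests on Theorem~\ref{thm:totalRecycledBits}.

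\textbf{Correctness.} Run the partial Fisher-Yates shuffle on the array $0^{n-k}\,1\,2\cdots k$. Its $k$ draws $\text{Uniform}(1,i)$ for $i=n,\dots,n-k+1$ are a bijection onto the injective placements of the labels $1,\dots,k$ into $\{1,\dots,n\}$. So the final array is uniform among the $n!/(n-k)!$ arrays with $n-k$ zeros and the labels placed injectively. Each such array is determined by two things: the support, which is a $k$-subset, and the left-to-right reading of the labels, which is a permutation $\sigma$ of $\{1,\dots,k\}$. Counting, $\binom{n}{k}k! = n!/(n-k)!$, so the binary word and $\sigma$ are independent, and each is uniform. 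Consequently, feeding $\sigma$ to \textnormal{DeconstructPermutation} produces bits that are independent and uniform by Theorem~\ref{thm:totalRecycledBits}, and they are also independent of the output word. This independence is exactly what allows them to be counted as recycled.

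\textbf{Running time.} Each of the $k$ calls to $\text{Uniform}(1,i)$ costs expected $O(\log n)$ bits and time, for $O(k\log n)$ in total. The initialisation and the scan that reads off the word and $\sigma$ take $O(n)$. Deconstruction takes $O(k\log k)$ by Theorem~\ref{thm:totalRecycledBits}. The total is $O(n + k\log n)$. Since $k\sim n^\beta$ with $\beta<1$, we have $k\log n=o(n)$, so the runtime is $O(n)$.

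\textbf{Bit accounting.} This is the delicate step, because drawing $\text{Uniform}(1,i)$ independently for each $i$ could waste up to $O(1)$ bits per draw. I would generate each $\text{Uniform}(1,i)$ with an entropy-near-optimal sampler, for example the Knuth--Yao/Lumbroso-type algorithm, which uses $\log_2 i + O(1)$ expected bits. The consumption is then
\[
\sum_{i=n-k+1}^{n}\bigl(\log_2 i + O(1)\bigr)=\log_2\frac{n!}{(n-k)!}+O(k)=\log_2\binom{n}{k}+\log_2(k!)+O(k).
\]
The recycled amount is $B(k)=k\log_2 k-O(k)$ by Theorem~\ref{thm:totalRecycledBits}. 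By Stirling, $\log_2(k!)=k\log_2 k-O(k)$, so $\log_2(k!)-B(k)=O(k)$. The net consumption is therefore $\log_2\binom{n}{k}+O(k)$.

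Finally, $\mathcal H=\log_2\binom{n}{k}\sim k(1-\beta)\log_2 n$, so $O(k)=o(\mathcal H)$ and the algorithm is first order entropic.

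The main obstacle is guaranteeing an additive $O(1)$ expected overhead per uniform draw. A naive rejection sampler can cost up to about twice $\log_2 i$ bits, which would give an $O(k\log n)$ overhead and break the bound. I would invoke a sampler with proven $\log_2 i+O(1)$ expected cost, namely the Fast Dice Roller, whose expected cost is at most $\log_2 i+2$. If that citation is unavailable, the fallback is to generate the $k$ draws jointly as a single uniform integer on $\{0,\dots,n!/(n-k)!-1\}$ via mixed-radix decoding, paying $O(1)$ overall. The cost there is that the arithmetic on large integers must then be checked against the linear-time claim.
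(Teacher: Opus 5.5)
Your proposal is correct and follows the paper's proof. The time bound is the same, $O(k\log n)+O(k\log k)=O(n)$, and so is the bit accounting: $\log_2\frac{n!}{(n-k)!}+O(k)$ bits consumed using $\log_2 i+O(1)$-cost uniform samplers, minus $B(k)=\log_2(k!)-O(k)$ bits recycled via Theorem~\ref{thm:totalRecycledBits}. Your correctness paragraph is a useful addition: using $\binom{n}{k}\,k!=n!/(n-k)!$, it shows explicitly that the output word and the label permutation $\sigma$ are independent and uniform, so the recycled bits are independent of the returned word, a point the paper only asserts.
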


\begin{proof}
\textbf{Time Complexity:} 
Performing the $k$ Fisher-Yates insertions takes $O(k \log n)$ time, while extracting the random bits from the resulting permutation of size $k$ requires $O(k \log k)$ time.
Because $k \sim n^\beta$ for $\beta \in (0,1)$, $k\log n \sim n^\beta \log n = O(n)$ and $k\log k = O(n)$, hence both steps are bounded by $O(n)$, so the overall runtime is in $O(n)$. 
One can note that this cost could be decreased to sub-linear time if the algorithm returned the sparse set of indices for the $1$s rather than explicitly writing out the complete binary word.

% \textbf{Net Bit Complexity:}
% The partial Fisher-Yates shuffle requires $k$ independent draws from discrete uniform distributions of decreasing sizes: $n, n-1, \cdots, n-k+1$.
% Using the algorithms described in \cite{lumbroso2013optimal,saad2020fast}, sampling a uniform integer in $[i]$ takes $\log(i) + O(1)$ random bits.
% The expected number of random bits consumed is therefore 
% \begin{equation*}
%     \sum_{i=n-k+1}^n \log_2 i  + O(1) = \log_2\left(\frac{n!}{(n-k)!}\right) + O(k).
% \end{equation*}

% The algorithm then recycles a uniform permutation of size $k$, by theorem~\ref{thm:total_recycled_bits}, the expected number of random bits returned to the randomness pool is:
% \begin{equation*}
%     k \log_2 k - O(k) = \log_2(k!) - O(k).
% \end{equation*}

% So all in all, the expected net bit consumption is $\log_2 \left(\frac{n!}{(n-k)!}\right) - \log_2(k!) + O(k) = \log_2 \binom{n}{k} + O(k)$ which matches the entropy of the structure up to $O(k)$ (and $O(k) = o(\log_2 \binom{n}{k})$ for our range of $k$).

\textbf{Net Bit Complexity:} The partial Fisher-Yates shuffle performs $k$ independent draws from discrete uniform distributions of decreasing sizes $n, \dots, n-k+1$. 
Using optimal generation algorithms~\cite{lumbroso2013optimal,saad2020fast}, sampling in $[i]$ requires $\log_2 i + O(1)$ bits. 
The expected total consumption is therefore $\sum_{i=n-k+1}^n \log_2 i + O(k) = \log_2(\frac{n!}{(n-k)!}) + O(k)$. 
The algorithm subsequently recycles a uniform permutation of size $k$, returning $k \log_2 k - O(k) = \log_2(k!) - O(k)$ random bits to the pool (Theorem~\ref{thm:totalRecycledBits}). 
Overall, the net bit consumption is exactly $\log_2(\frac{n!}{(n-k)!}) - \log_2(k!) + O(k) = \log_2 \binom{n}{k} + O(k)$, which asymptotically matches the entropy of the structure since $O(k) = o(\log_2 \binom{n}{k})$ in our target regime.

\end{proof}

\subsection{Near Entropic Binary Word Sampling}

We now present a linear-time, near-entropic algorithm that generates a uniform binary word of length $n$ with $k \sim n^\beta$ $1$s (where $\beta \in (0,1)$). 
Algorithm~\ref{alg:chainedFy} simply consumes a small number of random bits, and it does not ``give back'' random bits.
It works by chaining generation and deconstruction phases across $l+1$ discrete steps.
The entropy ``wasted'' on the internal ordering of the $1$s inserted during step $i$ is extracted and recycled to power the insertions of step $i+1$. 
We define the constant $\alpha := \frac{1-\beta}{1-\beta^{l+1}}$. 
On the $i$-th step (for $i=0, \cdots, l$), we insert $k_i = \alpha \beta^i k$ elements. 
The procedure is described in algorithm~\ref{alg:chainedFy}.

\begin{algorithm}[htbp]
\SetKwInOut{Input}{Input}
\SetKwInOut{Output}{Output}
\DontPrintSemicolon
\caption{$l$-ChainedRecyclingFY$(n, k)$}
\label{alg:chainedFy}
\small
\Input{Word length $n$, $k$ such that $k \sim n^\beta$}
\Output{A uniform binary word $W \in \{0,1\}^n$ of Hamming weight $k$}
\BlankLine
$\alpha \leftarrow \frac{1-\beta}{1-\beta^{l+1}}$\;
Initialize $W \leftarrow 0^{n-k}$\;

\BlankLine
\For{$i \leftarrow 0$ \KwTo $l$}{
    $k_i \leftarrow \alpha \beta^i k$
    
    \eIf{$i = 0$}{
        $Bits \leftarrow \varepsilon$ the empty word\;
    }{
        \tcp{Recycle the permutation from the previous step}
        $Bits \leftarrow Bits\cdot \textnormal{DeconstructPermutation}(\pi_{i-1})$\;
    }
    
    \tcp{Insert distinguishable elements into the remaining zeros}
    Insert elements $\{1, \dots, k_i\}$ into $W$ via Fisher-Yates, consuming the random bits from $Bits$ and drawing fresh bits only when $Bits$ is exhausted\;
    
    \tcp{Extract their spatial ordering as a uniform permutation}
    $\pi_i \leftarrow$ The sequence of the newly inserted $k_i$ elements read from left to right\;
    
    \tcp{Standardize the inserted elements}
    Replace all elements $\{1, \cdots, k_i\}$ in $W$ with $1$\;
}
\Return{$W$}\;
\end{algorithm}

\begin{theorem}
\label{thm:chainedRecycling}
We assume $k \sim n^\beta$ with a fixed $\beta \in (0,1)$.
For any constant integer $l \ge 0$, \cref{alg:chainedFy} generates a uniform binary word of length $n$ and Hamming weight $k$ in $O(n)$ time.
The total cost in random bits is
$$ \frac{1-\beta}{1-\beta^{l+1}} k \log_2 n + O(k), \quad \text{while} \quad \log_2\binom{n}{k} = (1-\beta) k \log_2 n + O(k)$$
\end{theorem}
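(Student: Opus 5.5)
We split the argument into three parts: correctness (uniformity), running time, and the random-bit accounting. The accounting is a telescoping sum over the $l+1$ steps, driven by Theorem~\ref{thm:totalRecycledBits} and Theorem~\ref{thm:optimalFyRecycling}.

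\textbf{Uniformity.} The algorithm is an ordinary partial Fisher-Yates insertion of $k=\sum_i k_i$ ones. Here $\sum_{i=0}^{l}\alpha\beta^i k = k$ by the choice of $\alpha$, with the rounding of the $k_i$ absorbed into the $O(\cdot)$ terms. It is correct provided every uniform integer is drawn from independent fair bits. So the key point is that the bits produced by DeconstructPermutation$(\pi_{i-1})$ are independent of the current partial word $W$. This holds because $\pi_{i-1}$ (the left-to-right order of the labels inserted at step $i-1$) is uniform and independent of the \emph{set} of positions occupied. Indeed, by the symmetry of Fisher-Yates, conditioning on the support leaves all orderings of the labels equally likely. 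Lemma~\ref{lem:recycleMerge} and Theorem~\ref{thm:totalRecycledBits} then show that the extracted bits are i.i.d.\ uniform given $W$. Leftover unused bits in $Bits$ stay independent of everything generated so far. Consequently the whole run is distributed as FY-Binary$(n,k)$.

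\textbf{Time.} Step $i$ costs $O(k_i\log n)$ for the insertions and $O(k_i\log k_i)$ for the deconstruction, plus $O(n)$ for the final standardization. Summing over the constant number $l+1$ of steps gives $O(k\log n + n) = O(n)$, since $k\sim n^\beta$.

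\textbf{Bits.} Let $F_i$ be the fresh bits drawn at step $i$. Step $i$ needs $\sum_{j}\log_2(n-j)+O(k_i) = k_i\log_2 n + O(k_i)$ bits in expectation, because $k\ll n$ gives $\log_2(n-j)=\log_2 n+O(k/n)$. By Theorem~\ref{thm:totalRecycledBits}, it receives $k_{i-1}\log_2 k_{i-1}-O(k_{i-1})$ recycled bits. Since $\log_2 k_{i-1} = \beta\log_2 n + O(1)$, this equals $\beta k_{i-1}\log_2 n - O(k_{i-1}) = k_i\log_2 n - O(k)$, using $k_i=\beta k_{i-1}$. The geometric choice of $k_i$ is made exactly so that the recycled supply matches the demand of the next step up to $O(k)$ bits. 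Hence
\[
F_0 = k_0\log_2 n + O(k),
\]
and $F_i = O(k)$ for $i\ge1$. The total is $\alpha k\log_2 n + O(k)$. The final permutation $\pi_l$ is not recycled, which only affects the gap to entropy, not the stated cost. Combining this with $\log_2\binom nk = k\log_2(n/k)+O(k)=(1-\beta)k\log_2 n+O(k)$ completes the statement.

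\textbf{Main obstacle.} The delicate step is justifying $\mathbb{E}[F_i]\le \mathbb{E}[\text{demand}]-\mathbb{E}[\text{supply}]+O(k)$ when both quantities are random: fresh bits are drawn only when the pool is exhausted. We will use $F_i \le (\text{demand}-\text{supply})^+$ and bound $\mathbb{E}(X-Y)^+ \le \mathbb{E}X-\mathbb{E}Y + \mathbb{E}(Y-X)^+$. This reduces the problem to showing that the surplus $(Y-X)^+$ is $O(k)$ in expectation, via the concentration of the Matchbox variables, each with mean $O(\sqrt n)$ deficit. Alternatively, we can let surplus bits carry over to the next step, where they are used anyway.
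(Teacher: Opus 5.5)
Your running-time and bit accounting is the paper's proof essentially verbatim. Each step costs $O(k_i\log n)+O(k_i\log k_i)$ time and demands $k_i\log_2 n+O(k_i)$ bits. The recycled supply is $k_{i-1}\log_2 k_{i-1}-O(k_{i-1})=k_i\log_2 n-O(k)$ because $k_i=\beta k_{i-1}$, so only step $0$ draws $\Theta(k\log n)$ fresh bits.

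The paper stops there. It gives no uniformity argument, and it simply subtracts expected supply from expected demand. Both of your additional paragraphs therefore go beyond it. The positive-part issue you raise is real, and a second-moment bound suffices. Let $D_i$ be the bits read at step $i$ and $Y_i$ the bits recycled from $\pi_{i-1}$. Then
\[
\mathbb{E}(Y_i-D_i)^+\le(\mathbb{E}Y_i-\mathbb{E}D_i)^+ +\sqrt{\mathrm{Var}\,Y_i}+\sqrt{\mathrm{Var}\,D_i}=O(k).
\]
This holds because $\mathrm{Var}\,Y_i=\sum_s\mathrm{Var}(M_s)=O(k\log k)$ over the recursion tree, and each sampler's cost has $O(1)$ variance. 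Note that the Matchbox sizes $s$ are at most $k$, not $n$.

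The step that does not hold as written is ``leftover unused bits in $Bits$ stay independent of everything generated so far.'' The leftover is the unread tail of a recycled block. Its length is the pool size minus the number of bits the samplers happened to read.
\begin{itemize}
\item What Lemma~\ref{lem:recycleMerge}'s proof actually establishes is $\mathbb{P}(\text{output}=u)=2^{-|u|}$ on a constrained set of words: given $M$ and $b$, the prefix contains exactly $n/2$ copies of $1-b$.
\item So what remains of a block depends on the bits already read from it. Those bits determine the draws, and hence $\pi_{i-1}$ given $W_{i-1}$.
\item Even if blocks were i.i.d.\ given their length, you would still need each sampler's bit count to be independent of its output.
\end{itemize}
Algorithm~\ref{alg:chainedFy} places the leftover \emph{before} $\textnormal{DeconstructPermutation}(\pi_{i-1})$. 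As a result, the step-$i$ stream need not be i.i.d.\ given $W_{i-1}$. Here is a toy version. Read one bit $X$ from a block of the prefix code $\{00,11,010,011,100,101\}$, then append $X$ after the rest of that block. The resulting stream begins with $00$ with probability $3/8$.

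The simplest repair is to discard the leftover at the end of each step, or set it aside until the end. The step-$i$ stream is then $\textnormal{DeconstructPermutation}(\pi_{i-1})$ followed by fresh bits. Your symmetry argument makes $\pi_{i-1}$ uniform given $W_0,\dots,W_{i-1}$. Hence this stream is i.i.d.\ given the past words, and the induction closes. The fresh cost of step $i$ is then exactly $(D_i-Y_i)^+$, which is your first route.

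Your ``carry over'' alternative is precisely what creates the dependence. It also does not remove the need for the bound. The total number of fresh bits equals $\sum_i D_i-\sum_i Y_i+L_l$, where $L_l$ is the final unused pool, and $L_l$ still has to be controlled.
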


\begin{proof}
\textbf{Time Complexity:} 
The algorithm performs $l+1 = O(1)$ steps. Step $i$, runs in $O(k\log n)$ + $O(k \log k)$ as it inserts $O(k)$ elements and then recycles a permutation of size $O(k)$. 
Since $k = n^\beta$ for a constant $\beta < 1$, we have $k \log n = O(n)$, hence the overall expected runtime in $O(n)$.

\textbf{Bit Complexity:} 
To insert $k_i$ elements into an array of size $\le n$, the Fisher-Yates procedure requires $k_i \log_2 n + O(k_i)$ bits. 
Once inserted, these $k_i$ elements form a uniform random permutation. 
Deconstructing this permutation yields $\log_2(k_i!) = k_i \log_2 k_i - O(k_i)$ independent random bits.

% Since $k = n^\beta$, we have $k_i = \alpha \beta^i n^\beta$. Thus, the number of bits recycled from step $i$ is:
% \begin{align*}
% k_i \log_2(k_i) &= k_i \log_2(\alpha \beta^i n^\beta) \\
% &= k_i \left( \beta \log_2 n + \log_2(\alpha \beta^i) \right) \\
% &= \beta k_i \log_2 n - O(k_i)
% \end{align*}
% Crucially, because $k_{i+1} = \beta k_i$, the recycled bits amount to $k_{i+1} \log_2 n - O(k_i)$. 

% Therefore, summing the fresh bits across all stages gives:
% \begin{equation*}
% \text{Total Fresh Bits} 
% = k_0 \log_2 n + O(k_0) + \sum_{i=1}^l O(k_i) 
% = \alpha k \log_2 n + O(k).
% \end{equation*}
% With $\alpha = \frac{1-\beta}{1-\beta^{l+1}}$ it yields the stated bit complexity.

Since $k = n^\beta$, we have $k_i = \alpha \beta^i n^\beta$. 
Thus, the number of bits recycled from step $i$ is $k_i \log_2(k_i) = \beta k_i \log_2 n - O(k_i)$. 
Crucially, because $k_{i+1} = \beta k_i$, the recycled bits amount to exactly $k_{i+1} \log_2 n - O(k_i)$. Therefore, summing across all stages gives a total of $k_0 \log_2 n + O(k_0) + \sum_{i=1}^l O(k_i) = \alpha k \log_2 n + O(k)$. With $\alpha = \frac{1-\beta}{1-\beta^{l+1}}$, this yields the stated bit complexity.
\end{proof}

\begin{remark}
Theorem~\ref{thm:chainedRecycling} implies that for any desired tolerance $\varepsilon > 0$, we can bound the multiplicative over-consumption of random bits to $1+\varepsilon$ by demanding: $\frac{1}{1-\beta^{l+1}} \le 1+\varepsilon.$
I.e. by taking $l = \left\lceil \frac{-\log(1+1/\varepsilon)}{\log \beta} - 1 \right\rceil$, $l$-\textsc{ChainedRecyclingFY} is an $\varepsilon$-entropic, linear time generation algorithm.
\end{remark}

\section{Conclusion}\label{sec:conc}

We presented a linear-time (in the length of the generated word) algorithm for the uniform generation of fixed Hamming weight binary words in the previously untreated polynomially sparse regime.
Our method achieves a random bit consumption that matches the information-theoretic lower bound up to a multiplicative factor of $1+\varepsilon$ for a chosen $\varepsilon > 0$.

The algorithm is based on a random bit recycling procedure. 
We believe this paradigm of entropy deconstruction is generalizable to other combinatorial contexts.
Most notably, we are currently working to adapt this mechanism to recycle the ``wasted'' entropy from the rejected structures that inherently occur during exact-size generation with Boltzmann-type samplers. 
We believe that there is practical relevance to this recycling paradigm: it allows for entropic generation when sampling structures in large batches (by pooling recovered bits across iterations), or almost-entropic generation for single instances when the recycling is chained recursively within the generation procedure itself like done in algorithm~\ref{alg:chainedFy}.

More broadly, we believe that evaluating randomized algorithms through the lens of random bit consumption provides a more precise analytical framework than standard runtime analysis alone.
Treating random bits as a resource, can expose inefficiencies of standard methods and provides a guide for the conception of new algorithms.

\textbf{Acknowledgments: }
The authors would like to express their gratitude to Étienne Objois for many valuable and careful comments. 
This research was supported by anr-fwf project PAnDAG ANR-23-CE48-0014-01.

% \nocite{*}
\bibliographystyle{eptcs}
\bibliography{generic}
\end{document}